\newcommand{\nn}{\nonumber}
\newcommand{\nd}{n\times d}
\newcommand{\minp}{(\min,+)}
\newcommand{\maxp}{(\max,+)}
\newcommand{\R}{\mathbf{R}}
\newcommand{\Rm}{\mathbf{R}_{\min}}
\newcommand{\ra}{\rightarrow}
\newcommand{\om}{\otimes}
\newcommand{\op}{\oplus}
\newcommand{\V}{\mathcal{V}}
\newenvironment{proof}{{\bf Proof:} }{}
\newtheorem{theorem}{Theorem}
\newtheorem{lemma}[theorem]{Lemma}
\newtheorem{definition}[theorem]{Definition}
\newtheorem{corollary}{Corollary}
\newtheorem{example}{Example}
\def\R{\mathrm{R}}
\def\S{\mathcal{S}}
\newcounter{subequation}[equation]
\def\mathdisplay#1{%
  \ifmmode \@badmath
  \else
    $$\def\@currenvir{#1}%
    \let\dspbrk@context\z@
    \let\tag\tag@in@display \SK@equationtrue 
    \global\let\df@label\@empty \global\let\df@tag\@empty
    \global\tag@false
    \let\mathdisplay@push\mathdisplay@@push
    \let\mathdisplay@pop\mathdisplay@@pop
    \if@fleqn
      \edef\restore@hfuzz{\hfuzz\the\hfuzz\relax}%
      \hfuzz\maxdimen
      \setbox\z@\hbox to\displaywidth\bgroup
        \let\split@warning\relax \restore@hfuzz
        \everymath\@emptytoks \m@th $\displaystyle
    \fi
}
\newcounter{algostep}
\newcounter{acalgorithm}
\title{Approximate dynamic programming with $\minp$ linear function approximation for Markov decision processes.
}
\author{Chandrashekar L$^\dag$ \and Shalabh Bhatnagar$^\$$}
\affiliation{$^\dag$Dept Of CSA, IISc {\tt chandrul@csa.iisc.ernet.in}, $^\$$Dept Of CSA, IISc {\tt shalabh@csa.iisc.ernet.in}}
\begin{document}
\maketitle

\begin{abstract}
Markov Decision Processes (MDP) is an useful framework to cast optimal sequential decision making problems. Given any MDP the aim is to find the optimal action selection mechanism i.e., the optimal policy. Typically, the optimal policy ($u^*$) is obtained by substituting the optimal value-function ($J^*$) in the Bellman equation. Alternately $u^*$ is also obtained by learning the optimal state-action value function $Q^*$ known as the $Q$ value-function. However, it is difficult to compute the exact values of $J^*$ or $Q^*$ for MDPs with large number of states. Approximate Dynamic Programming (ADP) methods address this difficulty by computing lower dimensional approximations of $J^*$/$Q^*$. Most ADP methods employ linear function approximation (LFA), i.e., the approximate solution lies in a subspace spanned by a family of pre-selected basis functions. The approximation is obtain via a linear least squares projection of higher dimensional quantities and the $L_2$ norm plays an important role in convergence and error analysis. In this paper, we discuss ADP methods for MDPs based on LFAs in $\minp$ algebra. Here the approximate solution is a $\minp$ linear combination of a set of basis functions whose span constitutes a subsemimodule. Approximation is obtained via a projection operator onto the subsemimodule which is different from linear least squares projection used in ADP methods based on conventional LFAs. MDPs are not $\minp$ linear systems, nevertheless, we show that the monotonicity property of the projection operator helps us to establish the convergence of our ADP schemes. We also discuss future directions in ADP methods for MDPs based on the $\minp$ LFAs.
\end{abstract}

\section{Introduction}
Optimal sequential decision making problems in science, engineering and economics can be cast in the framework of Markov Decision Processes (MDP). Given an MDP, it is of interest to compute the optimal value-function ($J^*\in \R^n$) and/or the optimal-policy($u^*$), or $Q^* \in \R^{\nd}$ known as the $Q$ value-function which encodes both $J^*$ and $u^*$. The Bellman operator and Bellman equation play a central role in computing optimal value-function ($J^*\in \R^n$) and optimal policy ($u^*$). In particular, $J^*=TJ^*$ and $Q^*=HQ^*$, where $T\colon \R^n \ra \R^n $, $H\colon \R^{n\times d} \ra \R^{n\times d}$ are the Bellman and $Q$-Bellman operators respectively. Most methods to solve MDP such as value/policy iteration (\cite{BertB}) exploit the fact that $J^*$ and $Q^*$ are fixed points of the $T$ and $H$.\\
\indent Most problems arising in practice have large number of states and it is expensive to compute exact values of $J^*/Q^*$ and $u^*$. A practical way to tackle the issue is by resorting to approximate methods. Approximate Dynamic Programming (ADP) refers to an entire spectrum of methods that aim to obtain approximate value-functions and/or policies. In most cases, ADP methods consider a family of functions and pick a function that approximates the value function well. Usually, the family of functions considered is the linear span of a set of basis functions. This is known as linear function approximation (LFA) wherein the value-function of a policy $u$ is approximated as $J_u\approx\tilde{J}_u=\Phi r^*$. Here $\Phi$ is an $n \times k$ $\emph{feature}$ matrix and $r^* \in \R^k$ ($k<<n$) is the weight vector to be computed.\\
Given $\Phi$, ADP methods vary in the way they compute $r^*$. In this paper, we focus on ADP methods that solve the following Projected Bellman equation (PBE),
\begin{align}\label{pbe}
\Phi r^*= \Pi T_u \Phi r^*,
\end{align} 
where $\Pi$ is the projection matrix, $\Pi=\Phi(\Phi^\top D \Phi )^{-1} \Phi^\top$ and $D$ is any positive definite matrix. Once can show that \eqref{pbe} has a solution by showing that the projected Bellman operator (PBO) $\Pi T_u$ is a contraction map in the $L_2$ norm. Solving \eqref{pbe} only address the problem of policy-$\emph{evaluation/prediction}$, and to address the problem of $\emph{control}$ a policy improvement step is required. In order to guarantee an improvement in the policy, the prediction error $||J_u-\tilde{J}_u||_\infty$ needs to be bounded. Due to the use of linear least squares projection operator $\Pi$, we can bound only $||J_u-\tilde{J}_u||_D$, where $||x||_D=x^\top D x$. Consequently policy improvement is not guaranteed and an approximate policy iteration scheme will fail to produce a convergent sequence of policies. Thus the problems of prediction and control are addressed only partially. In particular, convergence and performance bounds are unsatisfactory. Also there is no convergent scheme using conventional LFAs that can approximate $Q^*$.\\
\indent The $\minp$ algebra differs from conventional algebra, in that $+$ and $\times$ operators are replaced by $\min$ and $+$ respectively. Similarly $\maxp$ algebra replaces $+$ with $\max$ and $\times$ with $+$. It is known that finite horizon deterministic optimal control problems with reward criterion are $\maxp$ linear transformations which map the cost function to the optimal-value function. A lot of work has been done in literature \cite{akian,Gaubert,mc2009,gaubert2011} that make use of $\maxp$ basis to solve to compute approximate value-functions.
However, in the case of infinite horizon discounted reward MDP, due to the stochastic nature of the state evolution the Bellman operator ($T$/$H$) is neither $\maxp$ nor $\minp$ linear, which is a key difference from the aforementioned works that apply $\maxp$ basis. The primary focus of the paper it so explore $\minp$ LFAs in ADP schemes as opposed to conventional LFAs. Our specific contributions in this paper are listed below.
\begin{enumerate}
\item We argue that the Bellman operator arising in MDPs are neither $\maxp$/$\minp$ linear. We justify our choice of $\minp$ linear basis for value-function approximation in infinite horizon discounted reward MDPs.
\item We show that the projected Bellman operator in the $\minp$ basis is a contraction map in the $L_\infty$ norm. This enables us establish convergence and error bounds for the ADP schemes developed in this paper.
\item We develop a convergent ADP scheme called Approximate $Q$ Iteration (AQI) to compute $\tilde{Q}=Q^*$. Thus we solve both $\emph{prediction}$ and $\emph{control}$ problems which was a shortcoming in ADP with conventional LFAs.
\item We also present another convergent ADP scheme called Variational Approximate $Q$ Iteration (VAQI), which is based on the Variational or weak formulation of the PBE.
\item We present the error analysis of AQI and VAQI.
\end{enumerate}
Since the main focus of this paper was to study ADP in $\minp$ LFAs and the properties of the associated PBE, we have left out details on algorithmic implementation and analysis of the computational efficiency. Nevertheless, we present experimental results on random MDPs.
\vspace{-2pt}

\section{Markov Decision Processes}\label{mdp}
Markov Decision Processes (MDP) are characterized by their state space $S$, action space $A$, the reward function $g\colon S\times A \ra \R$, and the probability of transition from state $s$ to $s'$ under action $a$ denoted by $p_a(s,s')$. The reward for selecting an action $a$ in state $s$ is denoted by $g_a(s)$. We consider MDP with state space $S=\{1,2,\ldots,n\}$ and action set $A=\{1,2,\ldots,d\}$. For simplicity, we assume that all actions $a \in A$ are feasible in every state $s \in S$. A policy is a map $u\colon S\ra A$, and it describes the action selection mechanism\footnote{The policy thus defined are known as stationary deterministic policy (SDP). Policies can also be non-stationary and randomized. However since there exists a optimal policy that is SDP (\cite{Puter}) we restrict our treatment to SDPs.}. Under a policy $u$ the MDP is a Markov chain and we denote its probability transition kernel by $P_u=(p_{u(i)}(i,j),i=1\mbox{ to }n, j=1\mbox{ to }n)$. The discounted reward starting from state $s$ following policy $u$ is denoted by $J_u(s)$ and is defined as
\begin{align}\label{disc}
J_u(s)=\mathbf{E}[\sum^\infty_{t=0} \alpha^t g_{a_t}(s_t)|s_0=s,u].
\end{align}
Here $\{s_t\}$ is the trajectory of the Markov chain under $u$, and $a_t=u(s_t),\forall t\geq 0$. We call $J_u=(J_u(s),\forall s \in S) \in \R^n$ the value-function for policy $u$. The optimal policy denoted as $u^*$ is given by 
\begin{align}\label{optpol}
u^*=\arg\max_{u \in U}  J_u(s), \forall s \in S.
\end{align}
The optimal value function is then $J^*(s)=J_{u^*}(s), \forall s \in S$. The optimal value function and optimal policy are related by the Bellman Equation as below:
\begin{subequations}\label{bell}
\begin{align}
\label{bellval}J^*(s)&= \max_{a \in A} (g_a(s)+\alpha\sum^n_{s'=1}p_a(s,s') J^*(s')),\\
\label{bellpol}u^*(s)&= \arg\max_{a \in A} (g_a(s) +\alpha\sum^n_{s'=1}p_a(s,s')J^*(s')).
\end{align}
\end{subequations}
Usually, $J^*$ computed is first and $u^*$ is obtained by substituting $J^*$ in \eqref{bellpol}. One can also define the state-action value-function of a policy $u$ known as the $Q$ value-function as follows:
\begin{align}\label{qdef}
Q_u(s,a)=\mathbf{E}[\sum^\infty_{t=0} \alpha^t g_{a_t}(s_t)|s_0=s,a_0=a], a_t=u(s_t) \forall t>0.
\end{align}
The optimal $Q$ values obeys the $Q$ Bellman equation given below:
\begin{align}\label{qopt}
Q^*(s,a)=g_a(s)+\alpha\sum_{s'}p(s,s')\max_{a'}Q^*(s',a').
\end{align}
It is also known that (\cite{BertB}) $J^*(s)=\max_a Q^*(s,a)$. The optimal policy can be computed as $u^*(s)=\arg\max_a Q^*(s,a)$. Thus, in some cases it is beneficial to find $Q^*$ since it encodes both $J^*$ and $u^*$.
\subsection{Basic Solution Methods}
It is important to note that $J^*$ and $Q^*$ are fixed points of maps $T \colon \mathbf{R}^n \rightarrow \mathbf{R}^n$, $H \colon \mathbf{R}^{n\times d} \rightarrow \mathbf{R}^{n\times d}$ respectively defined as follows:
\begin{subequations}
\begin{align}
\label{bellop}(TJ)(s)&=\max_{a\in A}( g_a(s)+\alpha \sum^n_{j=1}p_a(s,s')J(s')), J \in \mathbf{R}^n.\\
\label{qbellop}(HQ)(s,a)&=( g_a(s)+\alpha \sum^n_{j=1}p_a(s,s')\max_{a\in A}Q(s',a)), Q \in \mathbf{R}^{n\times d}.
\end{align}
\end{subequations}
$T$ and $H$ are called the Bellman and $Q$-Bellman operators respectively. Given $J \in \R^n, Q\in \R^{\nd}$, $TJ$ and $HQ$ are the `one-step' greedy value-functions. We summarize certain useful properties of $H$ in the following Lemmas (see \cite{BertB} for proofs). 
\begin{lemma}\label{maxnorm}
$H$ is a $\max$-norm contraction operator, i.e., given $Q_1, Q_2 \in \mathbf{R}^{\nd}$ 
\begin{align}
||HQ_1-HQ_2||_\infty\leq \alpha ||Q_1-Q_2||_\infty
\end{align}
\end{lemma}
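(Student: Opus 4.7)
The plan is to prove the contraction property pointwise, then take a supremum, using the standard fact that taking a max is non-expansive in the $L_\infty$ norm.

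First, I would fix an arbitrary state-action pair $(s,a)$ and write out the difference explicitly using the definition of $H$ in \eqref{qbellop}. Since the reward term $g_a(s)$ is the same for both $Q_1$ and $Q_2$, it cancels, leaving
\begin{equation*}
(HQ_1)(s,a) - (HQ_2)(s,a) = \alpha \sum_{s'=1}^n p_a(s,s') \Bigl[\max_{a'\in A} Q_1(s',a') - \max_{a'\in A} Q_2(s',a')\Bigr].
\end{equation*}

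Next, I would apply the elementary inequality $\bigl|\max_{a'} f(a') - \max_{a'} g(a')\bigr| \leq \max_{a'} |f(a') - g(a')|$ to bound the bracketed term by $\max_{a'} |Q_1(s',a') - Q_2(s',a')|$, which in turn is at most $\|Q_1 - Q_2\|_\infty$. Since $p_a(s,\cdot)$ is a probability distribution, $\sum_{s'} p_a(s,s') = 1$, so the triangle inequality on the sum yields
\begin{equation*}
\bigl|(HQ_1)(s,a) - (HQ_2)(s,a)\bigr| \leq \alpha \|Q_1 - Q_2\|_\infty.
\end{equation*}
Taking the supremum over $(s,a) \in S \times A$ gives the claim.

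There is no real obstacle here: the only substantive ingredient is the non-expansiveness of $\max$, which follows from a short case analysis, and the rest is just the triangle inequality together with the stochasticity of $p_a(s,\cdot)$ and the fact that the immediate reward cancels. The proof would conclude with the observation that the factor $\alpha \in (0,1)$ makes $H$ a strict contraction, which is the property actually needed for the $Q$-value iteration to converge.
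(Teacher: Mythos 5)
Your proof is correct and is exactly the standard argument; the paper itself gives no proof for this lemma, simply citing Bertsekas and Tsitsiklis, where the same reasoning (reward cancellation, non-expansiveness of $\max$, stochasticity of $p_a(s,\cdot)$, then a supremum over $(s,a)$) is used. Nothing is missing.
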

\begin{corollary}\label{uni}
$Q^*$ is a unique fixed point of $H$.
\end{corollary}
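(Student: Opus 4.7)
The plan is to derive the corollary as a direct application of the Banach fixed point theorem, using Lemma \ref{maxnorm} as the key input. Since $H$ is a contraction with modulus $\alpha<1$ in the $\max$-norm, and $(\mathbf{R}^{n\times d},\|\cdot\|_\infty)$ is a complete metric space (as a finite-dimensional normed space), Banach's theorem immediately yields that $H$ admits a unique fixed point in $\mathbf{R}^{n\times d}$.

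More concretely, I would proceed in two short steps. First, existence: starting from any $Q_0 \in \mathbf{R}^{n\times d}$, the iterates $Q_{k+1}=HQ_k$ form a Cauchy sequence, since Lemma \ref{maxnorm} gives $\|Q_{k+1}-Q_k\|_\infty\leq \alpha^k\|Q_1-Q_0\|_\infty$ and the telescoping bound produces a geometric tail. By completeness the iterates converge to some limit $\bar Q$, and continuity of $H$ (which follows from the contraction property) yields $H\bar Q = \bar Q$. Second, uniqueness: if $\bar Q_1$ and $\bar Q_2$ were both fixed, then Lemma \ref{maxnorm} would give
\begin{align*}
\|\bar Q_1 - \bar Q_2\|_\infty = \|H\bar Q_1 - H\bar Q_2\|_\infty \leq \alpha \|\bar Q_1-\bar Q_2\|_\infty,
\end{align*}
which forces $\|\bar Q_1-\bar Q_2\|_\infty = 0$ since $\alpha<1$.

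It then remains only to identify this unique fixed point with $Q^*$ as defined in \eqref{qopt}. The $Q$-Bellman equation \eqref{qopt} is precisely the statement $Q^* = HQ^*$, so $Q^*$ is a fixed point of $H$, and by the uniqueness established above it is the only one.

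There is essentially no obstacle here: the work was already done in Lemma \ref{maxnorm}, and this corollary is just an invocation of the standard contraction mapping principle on a finite-dimensional space. The only minor point worth being careful about is verifying that $Q^*$ truly satisfies $Q^*=HQ^*$, which is immediate from the definition in \eqref{qopt} once one compares the two expressions component-wise.
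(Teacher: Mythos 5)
Your proposal is correct and matches the intended argument: the paper presents this as an immediate corollary of Lemma~\ref{maxnorm} via the contraction mapping principle on the complete space $(\mathbf{R}^{n\times d},\|\cdot\|_\infty)$, deferring details to the standard reference, and your Banach fixed point argument plus the identification $Q^*=HQ^*$ from \eqref{qopt} is exactly that route. No gaps.
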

\begin{lemma}\label{monotoneH}
$H$ is a monotone map, i.e., given $Q_1,Q_2 \in \R^{\nd}{\nd}$ is such that $Q\geq HQ$, it follows that $Q\geq Q^*$.
\end{lemma}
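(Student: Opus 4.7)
The plan is to treat the statement as two claims: first, that $H$ is monotone in the componentwise partial order on $\R^{n\times d}$, and second, that any $Q$ satisfying $Q\geq HQ$ dominates $Q^*$ componentwise. The first claim is structural and follows from the form of $H$; the second follows by iterating the first together with the contraction property in Lemma~\ref{maxnorm}.

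For monotonicity, I would go directly to the definition \eqref{qbellop}. Suppose $Q_1\leq Q_2$ componentwise. Then for every $s'$, $\max_{a'}Q_1(s',a')\leq \max_{a'}Q_2(s',a')$, since the pointwise inequality is preserved by taking maxima. Because $\alpha\geq 0$ and the transition probabilities $p_a(s,s')$ are nonnegative, multiplying by $\alpha\,p_a(s,s')$ and summing over $s'$ preserves the inequality, and adding the common $g_a(s)$ does not affect it. Therefore $(HQ_1)(s,a)\leq (HQ_2)(s,a)$ for every $(s,a)$, which is exactly $HQ_1\leq HQ_2$.

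For the second assertion, assume $Q\geq HQ$ and apply $H$ to both sides. Monotonicity gives $HQ\geq H(HQ)=H^2Q$, so transitivity yields $Q\geq HQ\geq H^2Q$. Iterating this argument inductively produces the chain $Q\geq HQ\geq H^2Q\geq \cdots \geq H^kQ$ for every $k\geq 1$. By Lemma~\ref{maxnorm}, $H$ is an $\alpha$-contraction in the max-norm, so $\|H^kQ-Q^*\|_\infty\leq \alpha^k\|Q-Q^*\|_\infty\to 0$ as $k\to\infty$, using Corollary~\ref{uni} to identify the limit. Passing to the limit in the componentwise inequality $Q\geq H^kQ$ then gives $Q\geq Q^*$, as desired.

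The only mildly delicate step is the passage to the limit, but since convergence in the max-norm on the finite-dimensional space $\R^{n\times d}$ implies componentwise convergence, the inequality survives the limit without difficulty. No serious obstacle is expected; the argument is entirely a reuse of the contraction property of $H$ plus the order-preserving structure inherent in maxima, nonnegative averaging, and addition.
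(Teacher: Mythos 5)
Your proof is correct. The paper itself offers no argument for this lemma, deferring to the cited reference, and your two-step decomposition --- monotonicity of $H$ from the order-preserving nature of maxima, nonnegative averaging, and addition of $g_a(s)$, followed by iterating $Q\geq HQ\geq H^2Q\geq\cdots$ and passing to the limit $H^kQ\to Q^*$ via the contraction property of Lemma~\ref{maxnorm} --- is exactly the standard argument the reference supplies, so there is nothing to fault and nothing genuinely different to compare.
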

\begin{lemma}\label{shiftH}
Given $Q\in \R^{\nd}$, and $k \in \R$ and $\mathbf{1} \in \R^{\nd}$ a vector with all entries $1$, then 
\begin{align}
H(Q+k\mathbf{1})=HQ+\alpha k\mathbf{1}.
\end{align}
\end{lemma}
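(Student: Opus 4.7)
The plan is a direct entrywise computation from the definition \eqref{qbellop} of $H$, exploiting two simple facts: constants commute with $\max$, and the rows of the transition kernel sum to one. Since both sides of the claimed identity live in $\R^{\nd}$, it suffices to fix an arbitrary state-action pair $(s,a)$ and check equality of the $(s,a)$-entry.

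First I would substitute $Q + k\mathbf{1}$ into the definition, writing
\begin{align*}
\bigl(H(Q+k\mathbf{1})\bigr)(s,a) = g_a(s) + \alpha \sum_{s'=1}^{n} p_a(s,s')\,\max_{a' \in A}\bigl(Q(s',a') + k\bigr).
\end{align*}
Next I would pull the additive constant $k$ out of the inner maximization: since $k$ does not depend on $a'$, we have $\max_{a'}(Q(s',a')+k) = \max_{a'} Q(s',a') + k$. Then I would split the sum in two, leaving the term $g_a(s) + \alpha \sum_{s'} p_a(s,s') \max_{a'} Q(s',a')$, which is exactly $(HQ)(s,a)$, plus a residual $\alpha k \sum_{s'} p_a(s,s')$. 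Finally, since $p_a(s,\cdot)$ is a probability distribution over next states, the residual simplifies to $\alpha k$, giving $(HQ)(s,a) + \alpha k$, which is precisely the $(s,a)$-entry of $HQ + \alpha k \mathbf{1}$.

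There is no real obstacle here; the only thing one needs to be a little careful about is that the shift $k\mathbf{1}$ interacts with the $\max_{a'}$ operator (not just with the expectation), and one must verify that the $\max$ is taken over a quantity that is uniformly shifted by the same constant $k$ across all $a'$. This is immediate because $\mathbf{1}$ has every entry equal to $1$, so $(Q+k\mathbf{1})(s',a') = Q(s',a') + k$ for every $a'$, and the translation invariance $\max(x+k) = \max(x) + k$ applies.
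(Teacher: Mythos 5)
Your proof is correct and complete: the entrywise substitution, the translation invariance of $\max$, and the fact that each row $p_a(s,\cdot)$ sums to one are exactly the three ingredients needed. The paper itself gives no proof of this lemma (it defers to the cited reference \cite{BertB}), and your argument is precisely the standard one that reference supplies, so there is nothing to reconcile.
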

It is easy to check that Lemmas\ref{uni}, \ref{monotoneH}, \ref{shiftH} hold for $T$ as well (\cite{BertB}).\\
Value iteration (VI) is the most basic method to compute $J^*/Q^*$ and is given by
\begin{subequations}\label{iters}
\begin{align}
\label{valiter}J_{n+1}&=TJ_n,\\
\label{qvaliter}Q_{n+1}&=HQ_n.
\end{align}
\end{subequations}
Iterations in \eqref{iters} are exact methods, and the contraction property of the Bellman operator ensures that $J_n\ra J^*$ in \eqref{valiter}, $Q_n\ra Q^*$ in \eqref{qvaliter} as $n\ra\infty$. They are also referred to as $\emph{look-up-table}$ methods or full state representation methods, as opposed to methods employing function approximation. $u^*$ can be computed by substituting $J^*$ in \eqref{bellpol}. Another basic solution method is Policy Iteration (PI) presented in Algorithm~\ref{politer}.
\begin{algorithm}
\caption{Policy Iteration}
\begin{algorithmic}[1]
\STATE Start with any policy $u_0$
\FOR{$i=0,1,\ldots,n$} 
\STATE Evaluate policy $u_{i}$ by computing $J_{u_i}$.
\STATE Improve policy $u_{i+1}(s)=\arg\max_a (g_a(s)+\alpha\sum_{s'}p_a(s,s')J_{u_i}(s'))$.
\ENDFOR
\RETURN $u_n$
\end{algorithmic}
\label{politer}
\end{algorithm}
VI and PI form the basis of ADP methods explained in the next section.

\section{Approximate Dynamic Programming in conventional LFAs}
The phenomenon called $\emph{curse-of-dimensionality}$ denotes the fact that the number of states grows exponentially in the number of state variables. Due to the $\emph{curse}$, as the number of variables increase, it is hard to compute exact values of $J^*/Q^*$ and $u^*$. Approximate Dynamic Programming (ADP) methods make use of \eqref{bell} and dimensionality reduction techniques to compute an approximate value-function $\tilde{J}$. Then $\tilde{J}$ can be used to obtain an approximate policy $\tilde{u}$ which is greedy with respect to $\tilde{J}$ as follows:
\begin{align}\label{subpol}
\tilde{u}(s)=\arg\max_a(g_a(s)+\alpha\sum_{s'}p_a(s,s')\tilde{J}(s')).
\end{align}
The sub-optimality of the greedy policy is given by the following result.
\begin{lemma}\label{subopt}
Let $\tilde{J}=\Phi r^*$ be the approximate value function and $\tilde{u}$ be as in \eqref{subpol}, then 
\begin{align}
||J_{\tilde{u}}-J^*||_\infty \leq \frac{2}{1-\alpha}||J^*-\tilde{J}||_\infty
\end{align}
\end{lemma}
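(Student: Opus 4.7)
The plan is to bound $\|J_{\tilde u}-J^*\|_\infty$ by inserting the intermediate quantity $\tilde J$ via the triangle inequality, and then bounding the two resulting pieces using contraction properties of the policy-specific Bellman operator $T_{\tilde u}$, defined by $(T_u J)(s)=g_{u(s)}(s)+\alpha\sum_{s'}p_{u(s)}(s,s')J(s')$. Two facts about $T_{\tilde u}$ will do the heavy lifting: it is an $\alpha$-contraction in the max-norm with unique fixed point $J_{\tilde u}$, and because $\tilde u$ is the greedy policy with respect to $\tilde J$, we have $T_{\tilde u}\tilde J = T\tilde J$.

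First I would bound $\|J_{\tilde u}-\tilde J\|_\infty$. Writing $J_{\tilde u}-\tilde J = T_{\tilde u}J_{\tilde u}-T_{\tilde u}\tilde J + T_{\tilde u}\tilde J - \tilde J$ and applying the max-norm contraction of $T_{\tilde u}$ yields
\begin{equation*}
\|J_{\tilde u}-\tilde J\|_\infty \le \alpha\|J_{\tilde u}-\tilde J\|_\infty + \|T_{\tilde u}\tilde J - \tilde J\|_\infty,
\end{equation*}
so that $\|J_{\tilde u}-\tilde J\|_\infty\le \frac{1}{1-\alpha}\|T\tilde J-\tilde J\|_\infty$ after using $T_{\tilde u}\tilde J = T\tilde J$.

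Next I would bound $\|T\tilde J-\tilde J\|_\infty$ in terms of $\|\tilde J - J^*\|_\infty$. Using the fixed-point property $TJ^*=J^*$ and the max-norm contraction of $T$ (which is the analogue of Lemma~\ref{maxnorm} for $T$ noted in the excerpt), the triangle inequality gives
\begin{equation*}
\|T\tilde J - \tilde J\|_\infty \le \|T\tilde J - TJ^*\|_\infty + \|J^*-\tilde J\|_\infty \le (1+\alpha)\|\tilde J - J^*\|_\infty.
\end{equation*}
Combining these two estimates produces $\|J_{\tilde u}-\tilde J\|_\infty\le \frac{1+\alpha}{1-\alpha}\|\tilde J-J^*\|_\infty$, after which one last triangle inequality $\|J_{\tilde u}-J^*\|_\infty\le \|J_{\tilde u}-\tilde J\|_\infty+\|\tilde J-J^*\|_\infty$ assembles the coefficients into $\frac{1+\alpha}{1-\alpha}+1=\frac{2}{1-\alpha}$, as desired.

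There is no real obstacle here; the only subtle point is the identity $T_{\tilde u}\tilde J = T\tilde J$, which requires the explicit definition of $\tilde u$ in \eqref{subpol} as the greedy policy with respect to $\tilde J$. Everything else is a routine contraction-plus-triangle-inequality argument, and the result does not depend on $\tilde J$ having the particular linear form $\Phi r^*$—only on $\tilde u$ being greedy with respect to $\tilde J$.
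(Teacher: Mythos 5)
Your proof is correct: the fixed-point decomposition $J_{\tilde u}-\tilde J=T_{\tilde u}J_{\tilde u}-T_{\tilde u}\tilde J+T_{\tilde u}\tilde J-\tilde J$, the greedy-policy identity $T_{\tilde u}\tilde J=T\tilde J$, and the final triangle inequality assemble exactly to the stated constant $\tfrac{2}{1-\alpha}$. The paper gives no proof of its own (it defers to \cite{BertB}), and your argument is precisely the standard one behind that citation; the only remark worth adding is that a slightly different split --- writing $J_{\tilde u}-J^*=T_{\tilde u}J_{\tilde u}-T_{\tilde u}J^*+T_{\tilde u}J^*-TJ^*$ and bounding $\|T_{\tilde u}J^*-TJ^*\|_\infty\leq 2\alpha\|\tilde J-J^*\|_\infty$ --- yields the sharper constant $\tfrac{2\alpha}{1-\alpha}$.
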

\begin{proof}
See \cite{BertB}.
\end{proof}
Thus a good ADP method is one that address both $\emph{prediction}$ (i.e., computing $\tilde{J}$) and the $\emph{control}$ (i.e., computing $\tilde{u}$) problems with desirable approximation guarantees.\\
Linear function approximators (LFA) have been widely employed for their simplicity and ease of computation. LFAs typically let $\tilde{J} \in V \subset\R^n$, where $V$ is the subspace spanned by a set of preselected basis functions $\{\phi_i \in \R^n, i=1,\ldots,k\}$. 
Let $\Phi$ be the $n\times k$ matrix with columns $\{\phi_i\}$, and $V=\{\Phi r|r\in \R^k\}$, then the approximate value function $\tilde{J}$ is of the form $\tilde{J}=\Phi r^*$ for some $r^* \in \R^k$. $r^*$ is a weight vector to be learnt, and due to dimensionality reduction ($k<<n$) computing $r^* \in \R^k$ is easier than computing $J^* \in \R^n$.\\
We now discuss popular ADP methods namely approximate policy evaluation (APE) and approximate policy iteration (API), which are approximation analogues of VI and PI. APE and API are based on linear least squares projection of higher dimensional quantities onto $V$. 
\subsection{Approximate Policy Evaluation}
$J^*/Q^*$ are not known and hence projecting them onto $V$ is impossible. Nevertheless, one can use the Bellman operator and linear least squares projection operator to write down a projected Bellman equation (PBE) as below:
\begin{align}\label{projbasic}
\Phi r^*=\Pi T_u\Phi r^*
\end{align}
where $\Pi=\Phi(\Phi^\top D \Phi)^{-1}\Phi^\top$ is the projection operator, $D$ is any diagonal matrix with all entries strictly greater than $0$ and $T_u$ is the Bellman operator restricted to a policy $u$ and is given by 
\begin{align}
(T_uJ)(s)&=g_{u(s)}(s)+\alpha\sum_{s'}p_{u(s)}(s,s')J(s'), J\in \R^n.\nn
\end{align}
The approximate policy evaluation (APE) is the following iteration:
\begin{align}\label{appvaliter}
\Phi r_{n+1}=\Pi T_u \Phi r_{n}.
\end{align}
The following Lemma~\ref{pbeconv} establishes the convergence of \eqref{appvaliter}.
\begin{lemma}\label{pbeconv}
If $\Pi=\Phi(\Phi^\top D\Phi)^{-1}\Phi^\top$, and $D$ be a diagonal matrix with the $i^{th}$ diagonal entry being the stationary probability of visiting state $i$ under policy $u$, then $\Pi T_u$ is a contraction map with factor $\alpha$.
\end{lemma}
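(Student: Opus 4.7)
The plan is to factor the contraction claim through three observations: (i) the transition kernel $P_u$ is non-expansive in the weighted norm $\|\cdot\|_D$ precisely because $D$ is built from the stationary distribution, (ii) consequently the Bellman operator $T_u$ is an $\alpha$-contraction in that norm, and (iii) the projection $\Pi$ is non-expansive in $\|\cdot\|_D$ since it is the orthogonal projection onto $V$ with respect to the inner product $\langle x,y\rangle_D = x^\top D y$. Composing (ii) and (iii) gives the claim.

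For step (i), let $d = (d_1,\ldots,d_n)$ be the stationary distribution of $P_u$, so that $\sum_i d_i\, p_u(i,j) = d_j$. For any $J \in \R^n$, Jensen's inequality applied row-wise (each row $p_u(i,\cdot)$ being a probability distribution and $x\mapsto x^2$ being convex) yields
\begin{align*}
\|P_u J\|_D^2 \;=\; \sum_i d_i \Big(\sum_j p_u(i,j) J(j)\Big)^2 \;\leq\; \sum_i d_i \sum_j p_u(i,j) J(j)^2 \;=\; \sum_j d_j J(j)^2 \;=\; \|J\|_D^2,
\end{align*}
where the penultimate equality swaps the order of summation and uses stationarity. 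For step (ii), since $T_u J = g_u + \alpha P_u J$, the reward term cancels in a difference and $T_u J_1 - T_u J_2 = \alpha P_u(J_1 - J_2)$, so step (i) immediately gives $\|T_u J_1 - T_u J_2\|_D \leq \alpha \|J_1 - J_2\|_D$. For step (iii), the standard projection-theorem argument: $\Pi$ minimizes $\|x-\cdot\|_D$ over $V$ and satisfies $\Pi^\top D \Pi = \Pi^\top D$, so $\|x\|_D^2 = \|\Pi x\|_D^2 + \|(I-\Pi) x\|_D^2 \geq \|\Pi x\|_D^2$.

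Combining these, for any $J_1, J_2 \in \R^n$,
\begin{align*}
\|\Pi T_u J_1 - \Pi T_u J_2\|_D \;\leq\; \|T_u J_1 - T_u J_2\|_D \;\leq\; \alpha \|J_1 - J_2\|_D,
\end{align*}
which is the desired $\alpha$-contraction.

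The main obstacle is step (i): without the specific choice that $D$ encodes the stationary distribution, $P_u$ need not be non-expansive in $\|\cdot\|_D$, and the whole argument collapses. The equality $d^\top P_u = d^\top$ is exactly what converts the Jensen upper bound into a clean bound by $\|J\|_D^2$; this is also why the hypothesis of Lemma~\ref{pbeconv} singles out the stationary distribution rather than an arbitrary positive diagonal.
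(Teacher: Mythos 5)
Your proof is correct and follows essentially the same route as the paper: non-expansiveness of $P_u$ in $\|\cdot\|_D$ (which the paper cites rather than proves), the resulting $\alpha$-contraction of $T_u$, and non-expansiveness of the $D$-orthogonal projection, composed together. You supply the Jensen/stationarity computation and the correct Pythagorean identity $\|x\|_D^2 = \|\Pi x\|_D^2 + \|(I-\Pi)x\|_D^2$ that the paper glosses over, so nothing further is needed.
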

\begin{proof}
Let $P_u$ be the probability transition matrix corresponding to policy $u$. Then one can show that (\cite{BertB}) for $z\in\R^n$ and $||z||^2_D=z^\top Dz$, $||P_uz||_D^2\leq ||z||_D^2$. Also, we know that $\Pi$ is a non-expansive map, because
\begin{align}
||\Pi x-\Pi y||_D&=||\Pi (x-y)||_D \nn\\
&\leq ||\Pi (x- y)||_D+||(I-\Pi) (x-y)||_D\nn\\
&=||x-y||_D\nn
\end{align}
Then
\begin{align}
||\Pi TJ_1-\Pi TJ_2||_D\leq || TJ_1-TJ_2||_D\leq \alpha ||J_1-J_2||_D\nn
\end{align}
\end{proof}
\begin{corollary}
Then the iteration in \eqref{appvaliter} converges to $\Phi r^*$ such that $\Phi r^*=\Pi T_u\Phi r^*$.
\end{corollary}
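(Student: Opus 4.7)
The plan is to invoke the Banach fixed-point theorem in the weighted Euclidean norm $\|\cdot\|_D$. Lemma~\ref{pbeconv} has just shown that the composite operator $\Pi T_u \colon \R^n \to \R^n$ is a contraction with modulus $\alpha < 1$ with respect to $\|\cdot\|_D$, and since $\R^n$ equipped with $\|\cdot\|_D$ is a finite-dimensional normed (hence complete) space, the contraction mapping theorem immediately yields a unique fixed point $\bar{J}$ of $\Pi T_u$, toward which every orbit converges geometrically.

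The next step is to verify that this fixed point actually has the form $\Phi r^*$ for some $r^* \in \R^k$. Since $\Pi$ is the orthogonal projection (in the $D$-inner product) onto the subspace $V = \{\Phi r : r \in \R^k\}$, the image of $\Pi T_u$ lies entirely in $V$. In particular $\bar{J} = \Pi T_u \bar{J} \in V$, so we can write $\bar{J} = \Phi r^*$, and this vector satisfies the PBE $\Phi r^* = \Pi T_u \Phi r^*$ by construction.

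Finally, to conclude the statement about the iteration $\Phi r_{n+1} = \Pi T_u \Phi r_n$, I apply the contraction inequality directly:
\begin{align*}
\|\Phi r_{n+1} - \Phi r^*\|_D = \|\Pi T_u \Phi r_n - \Pi T_u \Phi r^*\|_D \leq \alpha \|\Phi r_n - \Phi r^*\|_D,
\end{align*}
so iteratively $\|\Phi r_n - \Phi r^*\|_D \leq \alpha^n \|\Phi r_0 - \Phi r^*\|_D \to 0$. If one additionally assumes, as is standard, that $\Phi$ has full column rank, then convergence of $\Phi r_n$ in any norm transfers to convergence of $r_n$ itself.

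There is essentially no real obstacle here: the corollary is a textbook consequence of Lemma~\ref{pbeconv} together with Banach's theorem. The only mildly subtle point worth being explicit about is the membership $\bar{J} \in V$, which is what guarantees we may represent the limit as $\Phi r^*$ with a (unique, under full column rank of $\Phi$) weight vector $r^*$.
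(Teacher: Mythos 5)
Your proposal is correct and matches the paper's (implicit) reasoning: the paper states this corollary without proof as an immediate consequence of Lemma~\ref{pbeconv}, relying on exactly the Banach fixed-point argument you spell out. Your additional observations --- that the fixed point lies in $V$ because the range of $\Pi$ is $V$, and that recovering $r^*$ uniquely requires $\Phi$ to have full column rank --- are correct and make explicit details the paper glosses over.
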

The error bound for $\Phi r^*$ is given by
\begin{align}\label{errbnd}
||J_u-\Phi r^*||_D \leq \frac{1}{\sqrt{1-\alpha^2}}||J_u-\Pi J_u||_D,
\end{align}
One is inclined to think that an approximation analogue of \eqref{valiter} would yield an approximation to $J^*$. It is important to note that \eqref{appvaliter} only computes $\tilde{J}_u$ because \eqref{appvaliter} contains $T_u$ and not $T$. However, since operator $\Pi T$ might not be a contraction map in the $L_2$ norm, and $T_u$ cannot be replaced by $T$ in iteration \eqref{appvaliter}, and the PBE in \eqref{projbasic}. 
\subsection{Approximate Policy Iteration}
Approximate Policy Iteration (Algorithm~\ref{appoliter}) tackles both prediction and control problems, by performing APE and policy improvement at each step.
\begin{algorithm}
\caption{Approximate Policy Iteration (API)}
\begin{algorithmic}[1]
\STATE Start with any policy $u_0$
\FOR{$i=0,1,\ldots,n$} 
\STATE \label{appeval}Approximate policy evaluation $\tilde{J}_{i}= \Phi r^*_i$, where $\Phi r^*_i=\Pi T_{u_i}\Phi r^*_i$.
\STATE Improve policy $u_{i+1}(s)=\arg\max_a (g_a(s)+\alpha\sum_{s'}p_a(s,s')\tilde{J}_{i}(s'))$.
\ENDFOR
\RETURN $u_n$
\end{algorithmic}
\label{appoliter}
\end{algorithm}
The performance guarantee of API can be stated as follows:
\begin{lemma}\label{pgapi}
If at each step $i$ one can guarantee that $||\tilde{J}_i-J_{u_i}||_\infty\leq \delta$, then one can show that $\lim_{n\ra\infty}||J_{u_i}-J^*||\leq \frac{2\delta\alpha}{(1-\alpha)^2}$.
\end{lemma}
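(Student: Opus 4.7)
The plan is to reduce the statement to an approximate one-step contraction for $\beta_i := \|J^* - J_{u_i}\|_\infty$, and then pass to the $\limsup$ via a geometric series. Concretely I aim for a recursion of the form $\beta_{i+1} \leq \alpha\beta_i + \tfrac{2\alpha\delta}{1-\alpha}$, whose fixed point under iteration is exactly $\tfrac{2\alpha\delta}{(1-\alpha)^2}$.

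The first ingredient converts the evaluation error $\delta$ into a per-state bound on the suboptimality of the policy improvement step. Since $u_{i+1}$ is exactly greedy with respect to $\tilde J_i$, we have the crucial equality $T_{u_{i+1}}\tilde J_i = T\tilde J_i$. Combining this with the hypothesis $\|\tilde J_i - J_{u_i}\|_\infty\leq\delta$ and the monotonicity and shift properties (Lemma \ref{shiftH}) of both $T$ and $T_{u_{i+1}}$ yields the componentwise bound $T_{u_{i+1}} J_{u_i} \geq T J_{u_i} - 2\alpha\delta\,\mathbf{1}$. Informally, greedifying with respect to $\tilde J_i$ loses at most $2\alpha\delta$ per state compared to the exact greedy step with respect to $J_{u_i}$.

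The second ingredient is the standard vector identity $J^* - J_{u_{i+1}} = (I - \alpha P_{u_{i+1}})^{-1}(J^* - T_{u_{i+1}}J^*)$, obtained from the fixed-point equation $J_{u_{i+1}} = T_{u_{i+1}}J_{u_{i+1}}$ and linearity of $T_{u_{i+1}}$ up to the $\alpha P_{u_{i+1}}$ factor. I would then bound the residual $TJ^* - T_{u_{i+1}}J^*$ by fixing an optimal policy $u^*$ (so $TJ^* = T_{u^*}J^*$) and decomposing $J^* = J_{u_i} + (J^*-J_{u_i})$ inside each term, which produces $TJ^* - T_{u_{i+1}}J^* = (T_{u^*}J_{u_i} - T_{u_{i+1}}J_{u_i}) + \alpha(P_{u^*}-P_{u_{i+1}})(J^*-J_{u_i})$. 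The first summand is bounded above by $2\alpha\delta\,\mathbf{1}$ via the previous paragraph together with $T_{u^*}J_{u_i}\leq TJ_{u_i}$; the second is a signed vector that I feed through $(I - \alpha P_{u_{i+1}})^{-1}$, a non-negative operator satisfying $(I-\alpha P)^{-1}\mathbf{1}=\tfrac{1}{1-\alpha}\mathbf{1}$ on any stochastic $P$. Only after applying the resolvent do I pass to $\|\cdot\|_\infty$.

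The step I expect to be the main obstacle is precisely the handling of $\alpha(I - \alpha P_{u_{i+1}})^{-1}(P_{u^*}-P_{u_{i+1}})(J^*-J_{u_i})$: a naive $\ell_\infty$-operator bound produces a multiplicative factor $\tfrac{2\alpha}{1-\alpha}$ in the recursion, which exceeds $1$ once $\alpha\geq 1/3$ and fails to contract. Exploiting $J^*\geq J_{u_i}$ componentwise together with the telescoping identity $\alpha(I - \alpha P)^{-1}P = (I - \alpha P)^{-1} - I$ to cancel the $\alpha P_{u_{i+1}}$ contribution against part of the resolvent is what reduces the multiplicative factor to $\alpha$ and produces the desired recursion. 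Once $\beta_{i+1}\leq\alpha\beta_i+\tfrac{2\alpha\delta}{1-\alpha}$ is in hand, iterating from an arbitrary $\beta_0$ and summing the geometric series in $\alpha$ yields $\limsup_i \beta_i \leq \tfrac{2\alpha\delta}{(1-\alpha)^2}$, which is the claim.
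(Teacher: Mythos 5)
The paper itself gives no proof of this lemma (it is quoted from \cite{BertB}), so I am judging your plan against the standard argument. Your skeleton is the right one: the recursion $\beta_{i+1}\leq\alpha\beta_i+\tfrac{2\alpha\delta}{1-\alpha}$ is exactly what is needed, the geometric-series passage to the limit is fine, and your first ingredient $T_{u_{i+1}}J_{u_i}\geq TJ_{u_i}-2\alpha\delta\mathbf{1}$ (via $T_{u_{i+1}}\tilde J_i=T\tilde J_i$, monotonicity and the shift property) is correct. The gap is in the step you yourself flag as the obstacle, and your proposed fix does not close it. After applying the telescoping identity, the problematic term becomes $\alpha(I-\alpha P_{u_{i+1}})^{-1}P_{u^*}v-\bigl((I-\alpha P_{u_{i+1}})^{-1}-I\bigr)v$ with $v=J^*-J_{u_i}\geq 0$. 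From $v\geq 0$ you only get that the second summand is $\leq 0$, and the first is only $\leq\tfrac{\alpha\beta_i}{1-\alpha}\mathbf{1}$; so the telescoping improves the naive factor $\tfrac{2\alpha}{1-\alpha}$ to $\tfrac{\alpha}{1-\alpha}$, not to $\alpha$, and the resulting recursion $\beta_{i+1}\leq\tfrac{\alpha}{1-\alpha}\beta_i+\tfrac{2\alpha\delta}{1-\alpha}$ fails to contract for $\alpha\geq 1/2$. The inequality you actually need, $\alpha(I-\alpha P)^{-1}(P'-P)v\leq\alpha\|v\|_\infty\mathbf{1}$ for stochastic $P,P'$ and $v\geq 0$, is false: with two states, $v=(\beta,0)^\top$, $P'$ sending both states to state $1$ and $P$ sending both to state $2$, one gets $(P'-P)v=\beta\mathbf{1}$ and hence $\alpha(I-\alpha P)^{-1}(P'-P)v=\tfrac{\alpha\beta}{1-\alpha}\mathbf{1}>\alpha\beta\mathbf{1}$. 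Rescuing your route would require structural information tying $P_{u^*}$, $P_{u_{i+1}}$ and $J^*-J_{u_i}$ together that you have not supplied.

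The standard repair abandons the resolvent around $J^*$ and instead lower-bounds $J_{u_{i+1}}$ by iterating $T_{u_{i+1}}$ from $J_{u_i}$. From $T_{u_{i+1}}J_{u_i}\geq TJ_{u_i}-2\alpha\delta\mathbf{1}\geq T_{u_i}J_{u_i}-2\alpha\delta\mathbf{1}=J_{u_i}-2\alpha\delta\mathbf{1}$, repeated application of $T_{u_{i+1}}$ (monotonicity plus the shift property) gives $T^{m+1}_{u_{i+1}}J_{u_i}\geq T_{u_{i+1}}J_{u_i}-2\delta(\alpha^2+\cdots+\alpha^{m+1})\mathbf{1}$, and letting $m\ra\infty$ yields $J_{u_{i+1}}\geq T_{u_{i+1}}J_{u_i}-\tfrac{2\alpha^2\delta}{1-\alpha}\mathbf{1}$. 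Chaining this with your first ingredient and the contraction of $T$ gives
\begin{align}
J^*-J_{u_{i+1}}\leq J^*-TJ_{u_i}+2\alpha\delta\mathbf{1}+\frac{2\alpha^2\delta}{1-\alpha}\mathbf{1}
=TJ^*-TJ_{u_i}+\frac{2\alpha\delta}{1-\alpha}\mathbf{1}\leq\alpha\beta_i\mathbf{1}+\frac{2\alpha\delta}{1-\alpha}\mathbf{1},\nn
\end{align}
which is precisely the recursion you were aiming for; your concluding limit argument then goes through unchanged.
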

Note that the error bound required by Lemma~\ref{pgapi} is in the $L_\infty$ norm, whereas \eqref{errbnd} is only in the $L_2$ norm. So API cannot guarantee an approximate policy improvement each step which is a shortcoming. Also, even-though each evaluation step (line~\ref{appeval} of Algorithm~\ref{appoliter}) converges, the sequence ${u_n}, n\geq0$ is not guaranteed to converge. This is known as policy $\emph{chattering}$ and is another important shortcoming of conventional LFAs. Thus the problem of $\emph{control}$ is only partially addressed by API.
\subsection{LFAs for $Q$ value-function}\label{appqlearn}
To alleviate the shortcomings in API, it is then natural to look for an approximation method that computes a policy in more direct fashion. Since we know that by computing $Q^*$ we obtain the optimal policy directly, it is a good idea to approximate $Q^*$. The PBE version of \eqref{qvaliter} is a plausible candidate and the iterations are given by,
\begin{align}\label{appqiter}
\Phi r_{n+1}=\Pi H \Phi r_n.
\end{align}
The above scheme will run into the following problems:
\begin{enumerate}
\item The $H$ operator \eqref{qbellop} contains a $\max$ term, and it is not straightforward show that $\Pi H$ is a contraction map in $L_2$ norm, and consequently one cannot establish the convergence of iterates in \eqref{appqiter}.
\item The issue pertaining to $\max$ operator can be alleviated by restricting $H$ to a policy $u$, i.e., consider iterations of form 
\begin{align}\label{resappqiter}
\Phi r_{n+1}=\Pi H_u \Phi r_{n+1}.
\end{align}
But iterates in \eqref{resappqiter} attempt to approximate $Q_u$ and not $Q^*$, which means the problem of $\emph{control}$ is unaddressed.
\end{enumerate}
We conclude this section with the observation that the important shortcomings of conventional LFAs related to convergence and error bound arise due to the $L_2$ norm. The main result of the paper is that ADP methods based on $\minp$ LFAs don't suffer from such shortcomings.

\section{$\minp$/$\maxp$ non-linearity of MDPs}
We introduce the $\Rm$ semiring and show that MDPs are neither $\minp$ nor $\maxp$ linear. The $\Rm$ semiring is obtained by replacing multiplication ($\times$) by $+$, and addition ($+$) by $\min$.
\begin{definition}
\begin{align}
&\text{Addition:} &x \op y&= \min(x,y)\nn\\
&\text{Multiplication:} &x \om y&= x+y\nn
\end{align}
\end{definition}
Henceforth we use, $(+, \times)$ and $(\op,\om)$ to respectively denote the conventional and $\Rm$ addition and multiplication respectively.
Semimodule over a semiring can be defined in a similar manner to vector spaces over fields. In particular we are interested in the semimodule $\mathcal{M}=\Rm^n$. Given $u, v \in \Rm^n$, and $\lambda \in \Rm$, we define addition and scalar multiplication as follows:
\begin{definition}
\begin{align}
(u\op v)(i)&=\min\{u(i),v(i)\}=u(i)\op v(i), \forall i=1,2,\ldots,n.\nn\\
(u\om \lambda)(i)&=u(i)\om \lambda=u(i)+\lambda, \forall i=1,2,\ldots,n.\nn
\end{align}
\end{definition}
Similarly one can define the $\R_{\max}$ semiring which has the operators $\max$ as addition and $+$ as multiplication.\\
\indent It is a well known fact that deterministic optimal control problems with cost/reward criterion are $\minp$/$\maxp$ linear. However,
the Bellman operator $T$ in \eqref{bellop} (as well as $H$ in \eqref{qbellop}) corresponding to infinite horizon discounted reward MDPs are neither $\minp$ linear nor $\maxp$ linear systems. We illustrate this fact via the following example.
\begin{example}
Consider an MDP with two states $S=\{s_1,s_2\}$, only one action, and a reward function $g$, and let the probability transition kernel be
\begin{align}\label{probtran}
P=\left[ {\begin{array}{cc}  0.5 & 0.5   \\ 0.5 & 0.5\\ \end{array} } \right]
\end{align}
For any $J\in \R^2$ the Bellman operator $T\colon \R^2 \ra \R^2$ can be written as
\begin{align}
(TJ)(s)=g(s)+\alpha\times(0.5\times J(1)+0.5\times J(2))
\end{align}
Consider vectors $J_1,J_2 \in \R^2$ such that $J_1=(1,2)$ and $J_2=(2,1)$ and $J_3=\max(J_1,J_2)=(2,2)$. Let $g(1)=g(2)=1$, and $\alpha=0.9$, then, it is easy to check that $TJ_1=TJ_2=(2.35,2.35)$, and $TJ_3=(2.8,2.8)$. However $TJ_3\neq \max(TJ_1,TJ_2)$, i.e., $TJ_3\neq (2.35,2.35)$. Similarly one can show that $T$ is neither a $\minp$ linear operator.
\end{example}$\blacksquare$

\section{$\minp$ linear functions}\label{semiring}
Even-though the Bellman operator is not $\minp$/$\maxp$ linear, the motivation behind developing ADP methods based on $\minp$ LFAs is to explore them as an alternative to the conventional basis representation. Thus the aim to understand the kind of convergence guarantees and error bounds that are possible in the $\minp$ LFAs.\\
Given a set of basis function $\{\phi_i,i=1,\ldots,k\}$, we define its $\minp$ linear span to be $\mathcal{V}=\{v|v=\Phi\om r\stackrel{def}{=}\min(\phi_1+r(1),\ldots,\phi_k+r(k)), r \in \Rm^k\}$. $\mathcal{V}$ is a subsemimodule. In the context value function approximation, we would want to project quantities in $\Rm^n$ onto $\mathcal{V}$. The $\minp$ projection operator $\Pi_M$ is given by (\cite{akian,cohen1996kernels,Gaubert})
\begin{align}\label{smproj}
\Pi_M u=\min\{ v | {v \in \mathcal{V}}, v \geq u\}, \forall u \in \mathcal{M}.
\end{align}
We can write the PBE in the $\minp$ basis
\begin{align}\label{projminpbasic}
v&=\Pi_M Tv, v\in \V\nn\\
\Phi \om r^*&=\min\{\Phi \om r^*\in \V|\Phi\om r^*\geq T \Phi \om r^*\}
\end{align}
Our choice is justified by the similarity in the structure of \eqref{projminpbasic} and the linear programming (LP) formulation of the infinite horizon discounted reward MDP.
\begin{align}\label{lp}
 &\mbox{ }\min ~c^\top J \\
      &\quad s.t\quad J(s)\geq g(s,a)+\alpha\sum_{s'}p_a(s,s')J(s'),\forall s \in S, a \in A\nn
\end{align}
Thus by making use of $\minp$ LFAs and the projection operator $\Pi_M$ we aim to find the minimum upper bound to $J^*$/$Q^*$. A closely related ADP method is the approximate linear program (ALP) given by
\begin{align}\label{alp}
 &\mbox{ }\min ~c^\top \Phi r \\
      &\quad s.t\quad \Phi r(s)\geq g(s,a)+\alpha\sum_{s'}p_a(s,s')(\Phi r)(s'),\forall s \in S, a \in A\nn
\end{align}
Though formulations \eqref{projminpbasic} and \eqref{alp} look similar, the former has its search space which is a subsemimodule formed by a $\minp$ linear span, whereas the latter has search space which is the intersection of subspace and set of constraints. The two formulations differ in the algorithmic implementation and performance bounds which we discuss in a longer version of the paper.

\section{ $\minp$ LFAs for $Q$ value approximation}
We now present an ADP scheme bases on solving the PBE in $\minp$ basis to compute approximate $Q^*\approx \tilde{Q}$. Our ADP scheme successfully addresses the two shortcomings of the ADP scheme in conventional basis. First, we show establish contraction of the projected Bellman operator in the $L_\infty$ norm. This enables us to show that our recursion to compute $\tilde{Q}$ converges. As discussed earlier, we can also obtain a greedy policy $\tilde{u}(s)=\max_a\tilde{Q}(s,a)$. Secondly, we also present an error bound for the $\tilde{Q}$ in the $\max$ norm and as a consequence we can also ascertain the performance of $\tilde{u}$.\\
The PBE we are interested in solving is\footnote{We did not consider $\Phi \om r^*=\Pi_M T \Phi \om r^*$ since it approximates only $J^*$ and is superseded by \eqref{qpbe} which computes approximate $Q^*\approx\tilde{Q}$. Thus \eqref{qpbe} addresses both $\emph{prediction}$ and $\emph{ rol}$ problems. We wish the remind the reader of the issues with \eqref{qpbe} in conventional basis discussed in section~\ref{appqlearn}.}
\begin{align}\label{qpbe}
\Phi \om r^*=\Pi_M H \Phi \om r^* 
\end{align}
Since we want to approximate $Q^*$, $\Phi$ is a $nd\times k$ feature matrix, and $Q^*\approx\tilde{Q}(s,a)=\phi^{(s-1)\times d+a}\om r^*$, where $\phi^i$ is the $i^{th}$ row of $\Phi$.
The projected $Q$ iteration is given by
\begin{align}\label{projqiter}
\Phi \om r_{n+1}=\Pi_M H \Phi \om r_n.
\end{align}
The following results help us to establish the fact that that the operator $\Pi_M H\colon \Rm^{\nd}\ra \Rm^{\nd}$ is a contraction map in the $L_\infty$ norm.
\begin{lemma}\label{monotone}
For $Q_1, Q_2 \in \Rm^{\nd}$, such that $Q_1\geq Q_2$, then $\Pi_M Q_1\geq \Pi_M Q_2$.
\end{lemma}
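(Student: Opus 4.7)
The plan is direct: unpack the definition
\begin{equation*}
\Pi_M Q = \min\{v \in \V : v \geq Q\}
\end{equation*}
and push the hypothesis $Q_1 \geq Q_2$ through it. First, I would observe that the feasible set in the definition shrinks monotonically: writing $S(Q) := \{v \in \V : v \geq Q\}$, whenever $Q_1 \geq Q_2$ componentwise, any $v \in \V$ with $v \geq Q_1$ automatically satisfies $v \geq Q_2$. Hence $S(Q_1) \subseteq S(Q_2)$.

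Second, I would pass from this inclusion to the comparison of the minima. Since $\Pi_M Q$ is defined as the pointwise minimum of $S(Q)$, enlarging the set can only decrease the minimum pointwise, so
\begin{equation*}
\Pi_M Q_2 \;=\; \min S(Q_2) \;\leq\; \min S(Q_1) \;=\; \Pi_M Q_1,
\end{equation*}
which is the desired monotonicity. Concretely: $\Pi_M Q_1$ is itself a member of $S(Q_1) \subseteq S(Q_2)$, so it is a candidate in the minimization defining $\Pi_M Q_2$, and therefore $\Pi_M Q_2 \leq \Pi_M Q_1$.

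The one technical point worth flagging---but which is not really an obstacle---is that the $\min$ in the definition of $\Pi_M Q$ is actually attained in $\V$. This is the standard closure-under-$\op$ property of a subsemimodule: since $\V$ is the $\minp$ linear span of the basis $\{\phi_i\}$, it is closed under pointwise infima of bounded families, so the greatest lower bound of $S(Q)$ lies in $\V$ and serves as the required minimum (cf.\ \cite{akian,cohen1996kernels,Gaubert}). Once this well-definedness is in hand, the lemma reduces to the pure set-inclusion argument above, with no computational content.
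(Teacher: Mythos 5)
Your proof is correct and is essentially the paper's argument: the paper dismisses this lemma with ``follows from the definition of the projection operator,'' and your set-inclusion argument ($Q_1 \geq Q_2$ implies $S(Q_1) \subseteq S(Q_2)$, hence the minimum over the larger set is smaller) is precisely the content being elided. Your side remark on attainment of the minimum in $\V$ is a reasonable point the paper also leaves implicit, and deferring it to the cited references is appropriate.
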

\begin{proof}
Follows from definition of projection operator in \eqref{smproj}.
\end{proof}
\begin{lemma}\label{shift}
Let $Q\in \Rm^{\nd}$, $V_1=\Pi_M Q$ be its projection onto $\V$ and $k\in \R$, and $\mathbf{1}\in \R^{\nd}$ be a vector with all components equal to $1$. The projection of $Q+k\mathbf{1}$ in $V_2=\Pi_M Q+k\mathbf{1}$.
\end{lemma}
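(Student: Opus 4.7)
The plan is to reduce the claim to the single structural fact that the subsemimodule $\V$ is invariant under translation by a constant vector, and then feed this into the definition \eqref{smproj} of $\Pi_M$.

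First I would verify the shift-invariance of $\V$. If $v \in \V$, then $v = \Phi \om r$ for some $r \in \Rm^k$, which componentwise reads $v(i) = \min_j (\phi_j(i) + r(j))$. Adding the scalar $k$ inside each term of the minimum gives $v(i) + k = \min_j(\phi_j(i) + (r(j)+k))$, so $v + k\mathbf{1} = \Phi \om (r + k\mathbf{1}_k) \in \V$, where $\mathbf{1}_k$ denotes the all-ones vector in $\Rm^k$. The same argument applied with $-k$ shows that $v - k\mathbf{1} \in \V$ as well, so translation by $k\mathbf{1}$ is a bijection of $\V$ onto itself.

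Next I would rewrite the defining minimum for $\Pi_M(Q + k\mathbf{1})$:
\begin{align*}
\Pi_M(Q + k\mathbf{1}) &= \min\{v \in \V : v \geq Q + k\mathbf{1}\} \\
&= \min\{v \in \V : v - k\mathbf{1} \geq Q\}.
\end{align*}
Substituting $w = v - k\mathbf{1}$ and using the shift-invariance to re-index the feasible set,
\begin{align*}
\Pi_M(Q + k\mathbf{1}) &= \min\{w + k\mathbf{1} : w \in \V,\, w \geq Q\} \\
&= \min\{w \in \V : w \geq Q\} + k\mathbf{1} = \Pi_M Q + k\mathbf{1},
\end{align*}
where pulling the constant $k\mathbf{1}$ out of the componentwise minimum is valid because $\min_v(w(i) + k) = \min_v w(i) + k$ for each $i$.

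The only step that requires any thought is the shift-invariance of $\V$ in the first paragraph; once that is in hand the rest is just a change of variables inside the definition of $\Pi_M$. There is no real obstacle, since adding a scalar to every coordinate corresponds, in the $\minp$ semiring, to $\minp$-multiplying the weight vector $r$ coordinatewise by $k$, and this keeps us inside the semimodule $\V$.
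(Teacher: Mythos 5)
Your proof is correct and follows essentially the same route as the paper's: both arguments come down to the fact that translation by $k\mathbf{1}$ maps the feasible set $\{v\in\V : v\ge Q\}$ bijectively onto $\{v\in\V : v\ge Q+k\mathbf{1}\}$, the paper phrasing this as a two-sided inequality ($V_2\le V_1+k\mathbf{1}$ and $V_1\le V_2-k\mathbf{1}$) while you phrase it as a change of variables inside the defining minimum. Your explicit check that $\V$ is closed under the shift $v\mapsto v+k\mathbf{1}$, via $v+k\mathbf{1}=\Phi\om(r+k\mathbf{1})\in\V$, is a step the paper's proof uses tacitly, so it is worth keeping.
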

\begin{proof}
We know that since $V_1\geq Q$, $V_1+k\mathbf{1}\geq Q+k\mathbf{1}$, and from the definition of the projection operator $V_2\leq V_1+k\mathbf{1}$. Similarly $V_2-k\mathbf{1}\geq Q$, so $V_1\leq V_2-k\mathbf{1}$.
\end{proof}
\begin{theorem}\label{contra}
$\Pi_M H$ is a contraction map in $L_\infty$ norm with factor $\alpha$.
\end{theorem}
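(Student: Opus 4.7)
The plan is to combine three ingredients already in hand: the $\alpha$-contraction of $H$ in $\|\cdot\|_\infty$ (Lemma~\ref{maxnorm}), the monotonicity of $\Pi_M$ (Lemma~\ref{monotone}), and the shift-commutativity of $\Pi_M$ (Lemma~\ref{shift}). The natural strategy is a two-stage argument: first show that $\Pi_M$ is non-expansive in the sup norm, then compose with the $\alpha$-contraction $H$.

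For the first stage, fix $Q_1, Q_2 \in \Rm^{\nd}$ and let $c = \|Q_1 - Q_2\|_\infty$. Then the componentwise inequalities $Q_1 \leq Q_2 + c\mathbf{1}$ and $Q_2 \leq Q_1 + c\mathbf{1}$ hold. Applying Lemma~\ref{monotone} to the first gives $\Pi_M Q_1 \leq \Pi_M(Q_2 + c\mathbf{1})$, and Lemma~\ref{shift} rewrites the right-hand side as $\Pi_M Q_2 + c\mathbf{1}$. Doing the symmetric computation yields $\Pi_M Q_2 \leq \Pi_M Q_1 + c\mathbf{1}$, so $\|\Pi_M Q_1 - \Pi_M Q_2\|_\infty \leq c$.

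For the second stage, simply chain this non-expansiveness with Lemma~\ref{maxnorm}:
\begin{align*}
\|\Pi_M H Q_1 - \Pi_M H Q_2\|_\infty \leq \|HQ_1 - HQ_2\|_\infty \leq \alpha\,\|Q_1 - Q_2\|_\infty,
\end{align*}
which is the desired $\alpha$-contraction of $\Pi_M H$.

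I do not anticipate a real obstacle; the content of the theorem is essentially bookkeeping once one notices that monotonicity plus shift-commutativity automatically force non-expansiveness in the sup norm (this is exactly the reason the $\minp$ projection behaves so much better than linear least squares in this setting). The only place one has to be a little careful is Lemma~\ref{shift}: it is stated for adding a real constant vector $k\mathbf{1}$, and that is precisely the form in which the inequalities $Q_i \leq Q_j + c\mathbf{1}$ appear, so the shift step is applied exactly as written.
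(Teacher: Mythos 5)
Your proof is correct and follows essentially the same route as the paper: both arguments rest on the monotonicity and shift-commutativity of $\Pi_M$ applied to the componentwise bounds $Q_i \leq Q_j + \|Q_1-Q_2\|_\infty\,\mathbf{1}$. The only organizational difference is that you first isolate the sup-norm non-expansiveness of $\Pi_M$ and then invoke Lemma~\ref{maxnorm}, whereas the paper pushes the constant shift through $H$ via Lemma~\ref{shiftH} (i.e.\ $H(Q+k\mathbf{1})=HQ+\alpha k\mathbf{1}$) and only then projects; both yield the contraction factor $\alpha$.
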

\begin{proof}
Let $Q_1,Q_2\in\Rm^{\nd}$, define $\epsilon\stackrel{def}{=}||Q_1-Q_2||_\infty$, then
\begin{align}
\Pi_M H Q_1- \Pi_M HQ_2&\leq \Pi_M H (Q_2+\epsilon\mathbf{1})-\Pi_M HQ_2\nn\\
				&=\Pi_M (H Q_2 +\alpha\epsilon\mathbf{1})-\Pi_M HQ_2\nn\\
				&=\alpha\epsilon\mathbf{1}.
\end{align}
Similarly $\Pi_M H Q_2- \Pi_M HQ_1\leq \alpha\epsilon\mathbf{1}$, so it follows that $||\Pi_M HQ_1-\Pi_MHQ_2||_\infty\leq \alpha ||Q_1-Q_2||_\infty$.
\end{proof}
\begin{corollary}\label{minpaviconv}
The approximate $Q$ iteration in \eqref{projqiter} converges to a fixed point $r^*$.
\end{corollary}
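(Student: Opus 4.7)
The plan is to invoke the Banach fixed-point theorem applied to the operator $\Pi_M H\colon \Rm^{\nd}\to\Rm^{\nd}$. Theorem~\ref{contra} just established that this operator is an $\alpha$-contraction in the $L_\infty$ norm with $\alpha\in(0,1)$, so once we have identified the right complete metric space, the corollary follows from a standard argument.

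First I would verify that the iterates stay in a well-behaved subset. Starting from any finite $Q_0=\Phi\om r_0$, the contraction inequality $\|Q_{n+1}-Q_n\|_\infty\leq \alpha\|Q_n-Q_{n-1}\|_\infty$ shows that $\{Q_n\}$ is a Cauchy sequence in the ordinary $L_\infty$ norm on $\R^{\nd}$ (restricted to finite vectors, which form a complete metric space). Hence $Q_n\to Q^{*}_\V$ for some $Q^*_\V\in\R^{\nd}$. Passing to the limit in $Q_{n+1}=\Pi_M H Q_n$ and using continuity of the $\alpha$-contraction $\Pi_M H$, we obtain $Q^*_\V=\Pi_M H Q^*_\V$, and uniqueness of the fixed point is automatic from the contraction property.

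Next I would observe that, by the definition \eqref{smproj} of $\Pi_M$, every image $\Pi_M u$ lies in the subsemimodule $\V$. Therefore $Q^*_\V=\Pi_M H Q^*_\V\in \V$, which means there exists $r^*\in\Rm^k$ with $Q^*_\V=\Phi\om r^*$. Substituting back, we obtain the desired equation $\Phi\om r^*=\Pi_M H \Phi\om r^*$, and the iterates $\Phi\om r_n$ converge to $\Phi\om r^*$ at geometric rate $\alpha$.

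The one mild subtlety, which I expect to be the main (but still minor) obstacle, is that convergence must be stated for the sequence $\{\Phi\om r_n\}$ in $\V$ rather than for $\{r_n\}$ itself, because the $\minp$ representation $v=\Phi\om r$ is generally not unique; different $r$'s can yield the same element of $\V$. So strictly speaking one should interpret the statement ``converges to a fixed point $r^*$'' as asserting the existence of some $r^*\in\Rm^k$ satisfying $\Phi\om r^*=\Pi_M H\Phi\om r^*$ and the convergence $\Phi\om r_n\to \Phi\om r^*$ in $L_\infty$. All of this is provided by the argument above, with no further work needed beyond Theorem~\ref{contra}.
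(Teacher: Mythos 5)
Your proposal is correct and is exactly the argument the paper intends: the corollary is stated as an immediate consequence of Theorem~\ref{contra} via the Banach fixed-point theorem, with no further proof given in the paper. Your additional care about completeness on finite vectors and about the non-uniqueness of the representation $v=\Phi\om r$ (so that convergence is properly a statement about $\Phi\om r_n$ in $\V$) is a worthwhile clarification but does not change the route.
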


\section{Variational formulation of PBE}
The projection operator $\Pi_M$ used in \eqref{appqiter} is exact. Let $v=\Pi_M u$, then for $\{w_i \in \Rm^n\},i=1,\ldots,m$ it follows from the definition of $\Pi_M$ that
\begin{align}\label{testfn}
w_i^\top v\geq w_i^\top u
\end{align}
where in the $\minp$ algebra the dot product $x^\top y=\min^n_{i=1}(x(i)+y(i))$. Let $W$ denote the $nd\times m$ test matrix whose columns are $w_i$. Now we shall define the approximate projection operator to be
\begin{align}\label{appproj}
\Pi^W_M u=\min\{v \in \mathcal{V}| W^\top v\geq W^\top u \}.
\end{align}
The superscript in $\Pi_M^W$ denotes the test matrix $W$. The iteration to compute approximate $Q$ values using $\Pi^W_M$ is given by
\begin{align}\label{appprojqiter}
\Phi \om r_{n+1}= \Pi^W_M H\Phi \om r_n.
\end{align}
Lemmas~\ref{monotone},~\ref{shift}, and Theorem~\ref{contra} continue to hold if $\Pi_M$ is replaced with $\Pi^W_M$. Thus by Corollary~\ref{minpaviconv}, we know that \eqref{appprojqiter} converges to an unique fixed point $r^*_W$ such that $\Phi \om r^*_W=\Pi^W_M H\Phi \om r^*_W$.
\begin{theorem}\label{eboundapp}
Let $\tilde{r}$ be such that $\tilde{r}=\arg\min_r||Q^*-\Phi\om r||_\infty$. Let $r^*$ be the fixed point of the iterates in \eqref{appprojqiter}, then
\begin{align}
||Q^*-\Phi\om r^*||_\infty&\leq \frac{2}{1+\alpha}(||Q^*-\Phi\om \tilde{r}||_\infty\nn\\&\mbox{  }+||\Phi\om \tilde{r}-\Pi_M^W \Phi\om \tilde{r}||_\infty)
\end{align}
\end{theorem}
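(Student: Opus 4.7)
Write $\tilde Q := \Phi \om \tilde r$ and $Q' := \Phi \om r^*$, and abbreviate $\epsilon := \|Q^* - \tilde Q\|_\infty$ and $\delta := \|\tilde Q - \Pi_M^W \tilde Q\|_\infty$ for the two terms appearing on the right-hand side. The overall plan is to split
\[
\|Q^* - Q'\|_\infty \;\le\; \|Q^* - \tilde Q\|_\infty + \|\tilde Q - Q'\|_\infty \;=\; \epsilon + \|\tilde Q - Q'\|_\infty
\]
by the ordinary triangle inequality, reducing everything to a single self-referential estimate on $\|\tilde Q - Q'\|_\infty$, which I will then close by exploiting the fixed-point identity $Q' = \Pi_M^W H Q'$ and the contraction of $\Pi_M^W H$ provided by Theorem~\ref{contra}.

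For the inner estimate, my plan is to insert $\Pi_M^W \tilde Q$ and $\Pi_M^W H \tilde Q$ as intermediate points:
\[
\|\tilde Q - Q'\|_\infty \;\le\; \|\tilde Q - \Pi_M^W \tilde Q\|_\infty \;+\; \|\Pi_M^W \tilde Q - \Pi_M^W H \tilde Q\|_\infty \;+\; \|\Pi_M^W H \tilde Q - \Pi_M^W H Q'\|_\infty .
\]
The first summand is $\delta$ by definition, and the third is at most $\alpha\,\|\tilde Q - Q'\|_\infty$ by the $\alpha$-contraction of $\Pi_M^W H$ (Theorem~\ref{contra}, extended to $\Pi_M^W$ as noted just after~\eqref{appprojqiter}). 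The middle summand requires an auxiliary fact: that $\Pi_M^W$ is an $L_\infty$ non-expansion. I would derive this in one line from the monotonicity and shift properties that the paper asserts for $\Pi_M^W$, using the standard argument that $\tilde Q \le H\tilde Q + \|\tilde Q - H\tilde Q\|_\infty \mathbf{1}$ (and the reverse) gives, after applying $\Pi_M^W$ and the identity $\Pi_M^W(u+k\mathbf{1}) = \Pi_M^W u + k\mathbf{1}$, the bound $\|\Pi_M^W \tilde Q - \Pi_M^W H \tilde Q\|_\infty \le \|\tilde Q - H\tilde Q\|_\infty$. A second triangle step combined with $Q^* = HQ^*$ and Lemma~\ref{maxnorm} then yields
\[
\|\tilde Q - H\tilde Q\|_\infty \;\le\; \|\tilde Q - Q^*\|_\infty + \|HQ^* - H\tilde Q\|_\infty \;\le\; (1+\alpha)\,\epsilon.
\]

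Collecting the three pieces gives $\|\tilde Q - Q'\|_\infty \le \delta + (1+\alpha)\epsilon + \alpha\|\tilde Q - Q'\|_\infty$, which rearranges to $(1-\alpha)\|\tilde Q - Q'\|_\infty \le \delta + (1+\alpha)\epsilon$; substituting back into the first display produces a bound of the advertised shape $\|Q^*-Q'\|_\infty \le C(\epsilon+\delta)$ with $C$ determined by $\alpha$. (My route delivers the prefactor $1/(1-\alpha)$; I would re-examine the paper's constant $2/(1+\alpha)$ carefully, since the chain above gives $(2\epsilon+\delta)/(1-\alpha)$ and I suspect a typo.) The main obstacle, in my view, is not the algebra in the triangle chain but the auxiliary step of showing that the approximate projection $\Pi_M^W$ of~\eqref{appproj}, whose defining constraint is the variational one $W^\top v \ge W^\top u$ rather than the pointwise $v \ge u$ used for $\Pi_M$, genuinely inherits monotonicity and shift-equivariance; once those are in hand, non-expansion in $L_\infty$ and hence the whole argument follow just as they do for $\Pi_M$ itself.
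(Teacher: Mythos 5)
Your proposal is correct and follows essentially the same route as the paper: both arguments hinge on the bound $||\Phi\om\tilde r - H\Phi\om\tilde r||_\infty\le(1+\alpha)\epsilon$ obtained from $Q^*=HQ^*$ and Lemma~\ref{maxnorm}, together with the $L_\infty$ non-expansiveness of $\Pi_M^W$ derived from its monotonicity and shift properties. The only divergence is the final assembly: you close a self-referential inequality using the fixed-point identity $\Phi\om r^*=\Pi_M^W H\Phi\om r^*$, whereas the paper starts the iteration \eqref{appprojqiter} at $r_0=\tilde r$ and telescopes a geometric series; the two are equivalent and give the same answer. Your suspicion about the prefactor is confirmed by the paper itself, whose proof concludes with $(2\epsilon+\beta)/(1-\alpha)$ rather than the $\tfrac{2}{1+\alpha}(\epsilon+\beta)$ displayed in the theorem statement, so the stated constant is an error and your $(2\epsilon+\delta)/(1-\alpha)$ matches what the argument actually delivers. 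Your caution about whether $\Pi_M^W$ genuinely inherits monotonicity and shift-equivariance is well placed, since the paper only asserts this; the properties do hold, however, because the feasible set $\{v\in\mathcal{V}: W^\top v\ge W^\top u\}$ shrinks as $u$ increases and translates by $k\mathbf{1}$ when $u$ does, so your one-line derivation of non-expansiveness goes through.
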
 
\begin{proof}
Let $\epsilon=||Q^*-\Phi\om\tilde{r}||_\infty$, by contraction property of $H$ (Lemma~\ref{maxnorm}) we know that
\begin{align}
||HQ^*-H\Phi\om \tilde{r}||_\infty\leq \alpha \epsilon. \nn
\end{align}
So have $||\Phi\om\tilde{r}-H\Phi\om\tilde{r}||_\infty\leq (1+\alpha)\epsilon$. Then
\begin{align}
||\Phi\om\tilde{r}-\Pi^W_M H\Phi \om \tilde{r}||_\infty&=||\Phi\om\tilde{r}-\Pi^W_M \Phi \om \tilde{r}||_\infty\nn\\&+||\Pi^W_M\Phi\om\tilde{r}-\Pi^W_M H\Phi \om \tilde{r}||_\infty\nn
\end{align}
Now
\begin{align}
\Pi^W_M \Phi\om\tilde{r}-\Pi^W_M H\Phi \om \tilde{r}&\leq\Pi^W_M \Phi\om\tilde{r}\nn\\&-\Pi^W_M (\Phi \om \tilde{r}-(1+\alpha)\epsilon)\nn\\
&=(1+\alpha)\epsilon\nn
\end{align}
Similarly $\Pi^W_M H\Phi \om \tilde{r}-\Pi^W_M\Phi\om\tilde{r}\leq (1+\alpha)\epsilon$, and hence 
\begin{align}
||\Phi\om\tilde{r}-\Pi^W_M H\Phi \om \tilde{r}||_\infty\leq (1+\alpha)\epsilon+\beta,
\end{align}
where $\beta=||\Phi\om\tilde{r}-\Pi^W_M \Phi \om \tilde{r}||_\infty$. 
Now consider the iterative scheme in \eqref{appprojqiter} with $r_0=\tilde{r}$, and 
\begin{align}
||Q^*-\Phi\om r^*||_\infty&=||Q^*-\Phi\om r_0+\Phi\om r_0\nn\\&-\Phi\om r_1+\ldots-\Phi\om r^*||_\infty\nn\\
&\leq ||Q^*-\Phi\om r_0||_\infty+||\Phi\om r_0-\Phi\om r_1||_\infty\nn\\&+||\Phi\om r_1-\Phi\om r_2||_\infty+\ldots\nn\\
&\leq \epsilon+(1+\alpha)\epsilon+\beta+\alpha((1+\alpha)\epsilon+\beta)+\ldots\nn\\
&=\epsilon(\frac{1+\alpha}{1-\alpha}+1)+\frac{\beta}{1-\alpha}\nn\\
&=\frac{2\epsilon+\beta}{1-\alpha}\nn
\end{align}
\end{proof}
The term $\beta$ in the error bound in Theorem~\ref{eboundapp} is the error due to the usage of $\Pi^W_M$. Thus for solution to \eqref{qpbe} $\beta=0$.

\section{Experiments}\label{exper}
We test our ADP schemes on a randomly generated MDP with $100$ states, i.e., $S=\{1,2,\ldots,100\}$, and action set $A=\{1,\ldots,5\}$. The reward $g_a(s)$ is a random integer between $1$ and $10$, and discount factor $\alpha=0.9$. We now describe feature selection, where $\{\phi_j, j=1,\ldots,k\}, \phi_j \in \Rm^{\nd}$ and $\{\phi^i,i=1,\ldots,n\}, \phi^i \in \Rm^k$ denote the columns and rows respectively of the feature matrix $\Phi$. The feature corresponding to a state-action pair $(s,a)$ is given by $\phi^{(s-1)\times d+a}$. Let $\phi^x,\phi^y$ be features corresponding to state action pairs $(s_x,a_x)$ and $(s_y,a_y)$ respectively, then
\begin{align}\label{dotp}
<\phi^x,\phi^y>= \phi^x(1)\om\phi^y(1)\op\ldots\op\phi^x(k)\om\phi^y(k).
\end{align}
We desire the following in the feature matrix $\Phi$.
\begin{enumerate}
\item Features $\phi^i$ should have unit norm, i.e., $||\phi^i||=<\phi^i,\phi^i>=\mathbf{0}$, since $\mathbf{0}$ is the multiplicative identity in the $\minp$ algebra.
\item For dissimilar states-action pairs, we prefer $<\phi^x,\phi^y>=+\infty$, since $+\infty$ is the additive identity in $\minp$ algebra.
\end{enumerate}
Keeping these in mind, we design the feature matrix $\Phi$ for the random MDP as in \eqref{feature}. For state-action pair $(s,a)$ let $x=(s-1)\times d+a$, then the feature
\begin{align}\label{feature}
\phi^{x}(i) = \left\{
        \begin{array}{ll}
            0 & : g_a(s) \in [g_{\min}+\frac{(i-1)L}{k},g_{\min}+\frac{(i)L}{k}] \\
       1000 & : g_a(s) \notin [g_{\min}+\frac{(i-1)L}{k},g_{\min}+\frac{(i)L}{k}],
        \end{array}
    \right.\nn\\
\forall  i=1,\ldots,k.
\end{align}
We use $1000$ in place of $+\infty$. It is easy to verify that $\Phi$ in \eqref{feature} has the enumerated properties. The results are plotted in Figure \ref{Vapp}. Here $J^*$ is the optimal value-function, $\tilde{J}_{EP}(s)=\max_a\tilde{Q}_{EP}(s,a)$, where $\tilde{Q}_{EP}$ is the value obtained via the iterative scheme in \eqref{projqiter} and subscript $EP$ denotes the fact that the projection employed in $\emph{exact}$ ($\Pi_M$). $\tilde{J}_{W}(s)=\max_a\tilde{Q}_W(s,a)$, where $\tilde{Q}_{W}$ is the value obtained via the iterative scheme in \eqref{appprojqiter} and subscript $W$ denotes the fact that the projection employed is $\Pi^W_M$. $u_{EP}(s)=\arg\max_a \tilde{Q}_{EP}(s,a)$ and  $u_{W}(s)=\arg\max_a \tilde{Q}_{W}(s,a)$ are greedy policies and $J_{u_{EP}}$, $J_{u_W}$ are their respective value functions. $\tilde{J}_{u_{arbt}}$ is the value function of an $\emph{arbitrary}$ policy, wherein a fixed arbitrary action is chosen for each state.

\begin{figure}
\begin{minipage}{0.45\textwidth}
\begin{tabular}{c}
\begin{tikzpicture}[scale=0.75]
    \begin{axis}[
	xlabel=State,
        ylabel=Discounted Cost,
	ymin=65,
	legend pos= south west,
	title={Optimal and Approximate Value Functions}
]
    \addplot[smooth,mark=.,blue] plot file {V.dat};
    \addplot[smooth,mark=o,red] plot file {V_Q.dat};
    \addplot[smooth,mark=+,black] plot file {V_Qv.dat};
     \legend{$J^*$,$\tilde{J}_{EP}$, $\tilde{J}_{W}$}
    \end{axis}
    \end{tikzpicture}
\end{tabular}
\end{minipage}
\caption{$||J^*-\tilde{J}_{EP}||_\infty=6.47$,$||J^*-\tilde{J}_W||_\infty=6.35$}
\label{Vapp}
\end{figure}
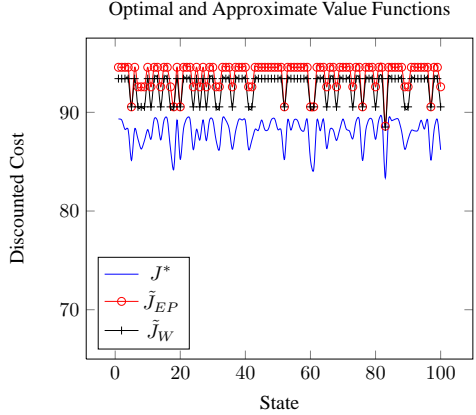

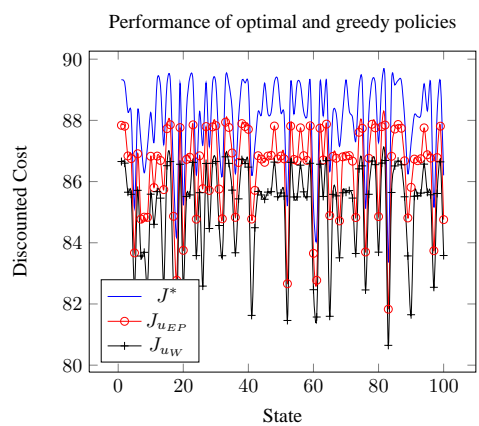
\begin{figure}
\begin{minipage}{0.45\textwidth}
\begin{tabular}{c}
\begin{tikzpicture}[scale=0.75]
    \begin{axis}[
	xlabel=State,
        ylabel=Discounted Cost,
	legend pos=south west,
	title={Performance of optimal and greedy policies}
]
    \addplot[smooth,mark=.,blue] plot file {V.dat};
    \addplot[smooth,mark=o,red] plot file {V_pol.dat};
    \addplot[smooth,mark=+,black] plot file {V_polv.dat};
     \legend{$J^*$,${J}_{u_{EP}}$, $J_{u_{W}}$}
    \end{axis}
    \end{tikzpicture}
\end{tabular}
\end{minipage}
\caption{$||J^*-J_{u_{EP}}||_\infty=2.61$,$||J^*-J_{u_W}||_\infty=5.61$, $||J^*-\tilde{J}_{u_{arbt}}||_\infty=40.49$}
\label{Vapp}
\end{figure}

\clearpage
\bibliographystyle{plainnat}
\bibliography{ref}
\end{document}